\theoremstyle{plain}
\newtheorem{theorem}{Theorem}
\newtheorem{corollary}[theorem]{Corollary}
\theoremstyle{definition}
\def\abs#1{\left| #1 \right|}
\def\paren#1{\left( #1 \right)}
\def\acc#1{\left\{ #1 \right\}}
\def\ceil#1{\left\lceil #1 \right\rceil}
\renewcommand{\le}{\leqslant}
\renewcommand{\ge}{\geqslant}
\title{On some interesting ternary formulas}
\author{Pascal Ochem\footnote{LIRMM, CNRS, Universit\'e de Montpellier, France. ochem@lirmm.fr} 
\and Matthieu Rosenfeld\footnote{LIP, ENS de Lyon, CNRS, UCBL, Universit\'e de Lyon, France. matthieu.rosenfeld@ens-lyon.fr}}
\begin{document}

\maketitle
\setcounter{footnote}{0}
\begin{abstract}
We obtain the following results about the avoidance of ternary formulas.
Up to renaming of the letters, the only infinite ternary words
avoiding the formula $ABCAB.ABCBA.ACB.BAC$ (resp.\ $ABCA.BCAB.BCB.CBA$) have the same set of recurrent factors as
the fixed point of $\texttt{0}\mapsto\texttt{012}$, $\texttt{1}\mapsto\texttt{02}$, $\texttt{2}\mapsto\texttt{1}$.
The formula $ABAC.BACA.ABCA$ is avoided by polynomially many binary words and there exists arbitrarily many
infinite binary words with different sets of recurrent factors that avoid it.
If every variable of a ternary formula appears at least twice in the same fragment, then the formula is $3$-avoidable.
The pattern $ABACADABCA$ is unavoidable for the class of $C_4$-minor-free graphs with maximum degree~$3$.
This disproves a conjecture of Grytczuk.
The formula $ABCA.ACBA$, or equivalently the palindromic pattern $ABCADACBA$, has avoidability index $4$.
\end{abstract}

\textbf{Acknowledgements}: This work was partially supported by the ANR project CoCoGro (ANR-16-CE40-0005).

\section{Introduction}\label{sec:intro}

A \emph{pattern} $p$ is a non-empty finite word over an alphabet
$\Delta=\acc{A,B,C,\dots}$ of capital letters called \emph{variables}.
An \emph{occurrence} of $p$ in a word $w$ is a non-erasing morphism $h:\Delta^*\to\Sigma^*$
such that $h(p)$ is a factor of $w$.
The \emph{avoidability index} $\lambda(p)$ of a pattern $p$ is the size of the
smallest alphabet $\Sigma$ such that there exists an infinite word
over $\Sigma$ containing no occurrence of $p$.

A variable that appears only once in a pattern is said to be \emph{isolated}.
Following Cassaigne~\cite{Cassaigne1994}, we associate a pattern $p$ with the \emph{formula} $f$
obtained by replacing every isolated variable in $p$ by a dot.
For example, the pattern $AABCABBDBBAA$ gives the formula $AAB.ABB.BBAA$.
The factors that are separated by dots are called \emph{fragments}.
So $AAB$, $ABB$, and $BBAA$ are the fragments of $AAB.ABB.BBAA$.

An \emph{occurrence} of a formula $f$ in a word $w$ is a non-erasing morphism $h:\Delta^*\to\Sigma^*$
such that the $h$-image of every fragment of $f$ is a factor of $w$.
As for patterns, the avoidability index $\lambda(f)$ of a formula $f$ is the size of the
smallest alphabet allowing the existence of an infinite word containing no occurrence of $f$.
Clearly, if a formula $f$ is associated with a pattern $p$,
every word avoiding $f$ also avoids $p$, so $\lambda(p)\le\lambda(f)$.
Recall that an infinite word is \emph{recurrent} if every finite factor appears
infinitely many times.
If there exists an infinite word over $\Sigma$ avoiding $p$,
then there exists an infinite recurrent word over $\Sigma$ avoiding $p$.
This recurrent word also avoids $f$, so that $\lambda(p)=\lambda(f)$.
Without loss of generality, a formula is such that no variable is isolated
and no fragment is a factor of another fragment.
We say that a formula $f$ is \emph{divisible} by a formula $f'$ if $f$ does not avoid $f'$,
that is, there is a non-erasing morphism $h$ such that the image of any fragment of $f'$ under $h$ is a factor of a fragment of $f$.
If $f$ is divisible by $f'$, then every word avoiding $f'$ also avoids $f$.
Let $\Sigma_k=\acc{0,1,\ldots,k-1}$ denote the $k$-letter alphabet. We denote by $\Sigma_k^n$ the $k^n$ words of length $n$ over $\Sigma_k$.

A formula is \emph{binary} if it has at most 2 variables.
We have recently determined the avoidability index of every binary formula~\cite{OchemRosenfeld2016}.
This exhaustive study led to the discovery of some interesting binary formulas that are
avoided by only a few binary words.
Determining the avoidability index of every ternary formula would be huge task.
However, we have identified some interesting ternary formulas and this paper describes their properties.

We say that two infinite words are equivalent if they have the same set of factors.
Let $b_3$ be the fixed point of $\texttt{0}\mapsto\texttt{012}$, $\texttt{1}\mapsto\texttt{02}$, $\texttt{2}\mapsto\texttt{1}$.
A famous result of Thue~\cite{Ber94,Thue06,Thue:1912} can be stated as follows:

\begin{theorem}~\cite{Ber94,Thue06,Thue:1912}
\label{thm:thue}
Every recurrent ternary word avoiding $AA$, \texttt{010}, and \texttt{212} is equivalent to $b_3$.
\end{theorem}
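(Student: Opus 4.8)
The plan is to prove the one nontrivial implication: if $w$ is a recurrent ternary word avoiding $AA$, $\texttt{010}$, and $\texttt{212}$, then $w$ has the same set of factors as $b_3$. Write $\sigma$ for the morphism $\texttt{0}\mapsto\texttt{012}$, $\texttt{1}\mapsto\texttt{02}$, $\texttt{2}\mapsto\texttt{1}$, so that $b_3=\lim_n\sigma^n(\texttt{0})$ and $b_3=\sigma^n(b_3)$ for every $n\ge 0$. In any image $\sigma(c)$ the letter $\texttt{0}$ occurs only as the first letter (and only for $c\in\acc{\texttt{0},\texttt{1}}$) and the letter $\texttt{2}$ occurs only as the last letter; it follows at once that $\sigma(x)$ contains neither $\texttt{010}$ nor $\texttt{212}$ for \emph{any} word $x$, so $b_3$ avoids these patterns, and $b_3$ is squarefree, which one verifies by a routine test-set check that $\sigma$ is a squarefree morphism. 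Finally $b_3$ contains all six length-$2$ words over $\Sigma_3$ with distinct letters (all of them already occur in $\sigma^3(\texttt{0})=\texttt{012021012102}$). This makes the statement meaningful and records the facts used below.

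First I would desubstitute $w$. An infinite squarefree word cannot use only two letters, so $w$ uses all three; as $w$ is recurrent we may replace it by the suffix starting at some occurrence of $\texttt{0}$ without changing its set of factors, so assume $w$ begins with $\texttt{0}$ and has infinitely many $\texttt{0}$'s. Write $w=\texttt{0}u_1\texttt{0}u_2\texttt{0}u_3\cdots$, where $u_i$ is the block of non-zero letters between consecutive $\texttt{0}$'s. Each $u_i$ is a nonempty squarefree word over $\acc{\texttt{1},\texttt{2}}$, hence one of $\texttt{1},\texttt{2},\texttt{12},\texttt{21},\texttt{121},\texttt{212}$; avoiding $\texttt{010}$ rules out $u_i=\texttt{1}$ and avoiding $\texttt{212}$ rules out $u_i=\texttt{212}$, so $u_i\in\acc{\texttt{2},\texttt{12},\texttt{21},\texttt{121}}$. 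Since $\texttt{02}=\sigma(\texttt{1})$, $\texttt{012}=\sigma(\texttt{0})$, $\texttt{021}=\sigma(\texttt{1})\sigma(\texttt{2})$, and $\texttt{0121}=\sigma(\texttt{0})\sigma(\texttt{2})$, each block $\texttt{0}u_i$ is a product of $\sigma$-images, and reading off the preimages yields $w=\sigma(v)$ for the ternary word $v$ produced block by block by the substitutions $\texttt{012}\mapsto\texttt{0}$, $\texttt{02}\mapsto\texttt{1}$, $\texttt{021}\mapsto\texttt{12}$, $\texttt{0121}\mapsto\texttt{02}$.

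Next I would show $v$ inherits the hypotheses. A square in $v$ gives the square $\sigma(A)\sigma(A)$ in $w=\sigma(v)$; an occurrence of $\texttt{010}$ in $v$ gives $\sigma(\texttt{010})=\texttt{01202012}$ in $w$, which contains the square $\texttt{2020}$; and an occurrence of $\texttt{212}$ in $v$, which (as $v$ is squarefree, so $\texttt{212}$ has neighbours in $\acc{\texttt{0},\texttt{1}}$) extends to one of $\texttt{1212}$, $\texttt{2121}$, $\texttt{02120}$, gives in $w$ one of $\sigma(\texttt{1212})=\texttt{021021}$, $\sigma(\texttt{2121})=\texttt{102102}$, $\sigma(\texttt{02120})=\texttt{0121021012}$, each containing a square — all contradicting that $w$ is squarefree. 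So $v$ is an infinite ternary word avoiding $AA$, $\texttt{010}$, $\texttt{212}$, and it is recurrent because decoding under $\sigma$ preserves recurrence ($\sigma$ is synchronizing, so each factor of $v$ lifts, via its $\sigma$-image, to occurrences in $w$, which recur). Iterating the desubstitution, $w=\sigma^n(w^{(n)})$ for every $n$, where each $w^{(n)}$ is recurrent and avoids $AA$, $\texttt{010}$, $\texttt{212}$; and since a suffix of a recurrent word has the same set of factors and $\sigma$ preserves equality of factor sets, $\mathrm{Fac}(w)=\mathrm{Fac}(\sigma^n(w^{(n)}))$ for all $n$.

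Finally I would conclude. Each $w^{(n)}$ uses all three letters, so $\texttt{0}$ is a factor of $w^{(n)}$ and hence $\sigma^n(\texttt{0})$ is a factor of $w$; since $\sigma^n(\texttt{0})$ is a prefix of $\sigma^{n+1}(\texttt{0})$ with limit $b_3$, we get $\mathrm{Fac}(b_3)=\bigcup_n\mathrm{Fac}(\sigma^n(\texttt{0}))\subseteq\mathrm{Fac}(w)$. Conversely, let $u$ be a factor of $w$ and choose $n$ with $|\sigma^n(c)|>|u|$ for every letter $c$ (possible since $\min_c|\sigma^n(c)|\to\infty$); then $u$ lies inside $\sigma^n(ab)$ for some length-$2$ factor $ab$ of $w^{(n)}$, and as $w^{(n)}$ is squarefree $a\neq b$, so $ab=c_ic_{i+1}$ for consecutive letters of $b_3=c_0c_1c_2\cdots$, whence $\sigma^n(ab)=\sigma^n(c_i)\sigma^n(c_{i+1})$ is a factor of $\sigma^n(b_3)=b_3$ and $u\in\mathrm{Fac}(b_3)$. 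Therefore $\mathrm{Fac}(w)=\mathrm{Fac}(b_3)$, i.e.\ $w$ is equivalent to $b_3$. I expect the main obstacle to be the desubstitution machinery — proving carefully that $\sigma$ is synchronizing and that the block decoding preserves recurrence, so that the iteration is valid and factor sets are tracked correctly; within the easier part, the reflection of $\texttt{212}$ (the only pattern whose $\sigma$-image alone stays clean, forcing one to use two letters of context) is the delicate point.
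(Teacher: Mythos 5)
The paper itself gives no proof of this statement: Theorem~\ref{thm:thue} is imported as a known result of Thue (via Berstel's survey) and used as a black box, so your argument can only be judged on its own merits. It is essentially the classical desubstitution proof and it is correct in structure and in all main steps: the block decomposition of a recurrent word avoiding $AA$, \texttt{010}, \texttt{212} into blocks \texttt{0}$u_i$ with $u_i\in\acc{\texttt{2},\texttt{12},\texttt{21},\texttt{121}}$, hence $w=\sigma(v)$ up to a suffix; the transfer of the hypotheses to $v$ via the squares inside $\sigma(\texttt{010})$, $\sigma(\texttt{1212})$, $\sigma(\texttt{2121})$, $\sigma(\texttt{02120})$; the bookkeeping $\mathrm{Fac}(w)=\mathrm{Fac}(\sigma^n(w^{(n)}))$; and the two final inclusions (via $\sigma^n(\texttt{0})$, and via length-$2$ factors $ab$ with $a\neq b$, all six of which occur in $b_3$) are all sound.

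Two blemishes, neither structural. First, your aside that $\sigma$ is a squarefree morphism is false: you yourself exhibit the square \texttt{2020} in $\sigma(\texttt{010})$ two paragraphs later. This is harmless, since squarefreeness of $b_3$ (indeed any avoidance property of $b_3$) is never actually used in your two inclusions, but the claim should be deleted or replaced (e.g.\ squarefreeness of $b_3$ follows from Thue's analysis, not from $\sigma$ being squarefree). Second, the recurrence of the desubstituted word $v$ is asserted via ``$\sigma$ is synchronizing'' without the actual synchronization rule; it should be stated (a position of $w$ starts a block iff it carries \texttt{0}, or carries \texttt{1} not preceded by \texttt{0}, so block boundaries are determined by one letter of left context, and every occurrence of a suitably extended $\sigma$-image of a factor of $v$ decodes to an occurrence of that factor). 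The same point covers the small corner case in your \texttt{212} analysis (an occurrence of \texttt{212} as a prefix of $v$ with no left neighbour): either recurrence of $v$ supplies the missing context, or one observes that only the context \texttt{02120} ever matters, since the next desubstitution step only inspects blocks lying between consecutive \texttt{0}'s. You flag this desubstitution machinery yourself as the delicate point; with it written out, the proof is complete.
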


In Section~\ref{sec:b3}, we obtain a similar result for $b_3$ by forbidding one ternary formula but without forbidding explicit factors in $\Sigma_3^*$.
In Section~\ref{sec:2}, we describe the set of binary words avoiding $ABACA.ABCA$ and $ABAC.BACA.ABCA$.
We show that these formulas are avoided by polynomially many binary words and that there exist infinitely many
recurrent binary words with different sets of recurrent factors that avoid them.
In the terminology of~\cite{OchemRosenfeld2016}, these formulas are not essentially avoided by a finite set of words.
In Section~\ref{sec:nice}, we consider \emph{nice} formulas.
A formula $f$ is nice if for every variable $X$ of $f$, there exists a fragment of $f$
that contains $X$ at least twice. This notion generalizes to formulas the notion of a \emph{doubled} pattern
(that is, a pattern that contains every variable at least twice).
Every doubled pattern is $3$-avoidable~\cite{O16}.
We show that every ternary nice formula is $3$-avoidable.
In Section~\ref{sec:disprove}, we show that $ABACADABCA$ is a $2$-avoidable pattern that is unavoidable on graphs with maximum degree $3$.
In Section~\ref{sec:pal}, we show that there exists a palindromic pattern with index $4$.

A preliminary version of this paper, without the results in Sections~\ref{sec:nice} and~\ref{sec:pal},
has been presented at WORDS 2017.

\section{Formulas closely related to $b_3$}\label{sec:b3}

For every letter $c\in\Sigma_3$, $\sigma_c:\Sigma_3^*\mapsto\Sigma_3^*$ is the morphism such that
$\sigma_c(a)=b$, $\sigma_c(b)=a$, and $\sigma_c(c)=c$ with $\acc{a,b,c}=\Sigma_3$. 
So $\sigma_c$ is the morphism that fixes $c$ and exchanges the two other letters.

We consider the following formulas.
\begin{itemize}
 \item $f_b=ABCAB.ABCBA.ACB.BAC$
 \item $f_1=ABCA.BCAB.BCB.CBA$
 \item $f_2=ABCAB.BCB.AC$
 \item $f_3=ABCA.BCAB.ACB.BCB$
 \item $f_4=ABCA.BCAB.BCB.AC.BA$
\end{itemize}

Notice that $f_b$ is divisible by $f_1$, $f_2$, $f_3$, $f_4$.

\begin{theorem}
\label{mainth}
Let $f\in\acc{f_b,f_1,f_2,f_3,f_4}$.
Every ternary recurrent word avoiding $f$ is equivalent to $b_3$, $\sigma_0(b_3)$, or $\sigma_2(b_3)$.
\end{theorem}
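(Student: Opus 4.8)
The plan is to reduce everything to the single formula $f_b$ and then, since $f_b$ is divisible by each of $f_1,f_2,f_3,f_4$, to observe that any word avoiding one of the $f_i$ automatically avoids $f_b$; hence it suffices to characterize the recurrent ternary words avoiding $f_b$, and the stronger statement about the $f_i$ follows because $b_3$, $\sigma_0(b_3)$, and $\sigma_2(b_3)$ must be checked to avoid \emph{all} of $f_b,f_1,f_2,f_3,f_4$. Actually, to get the full \emph{iff}, I would prove two directions: (i) $b_3$ (and its two images under $\sigma_0,\sigma_2$) avoids each $f_i$, and (ii) every recurrent ternary word avoiding $f_b$ is equivalent to one of these three. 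For (i), since $f_1,\dots,f_4$ all divide $f_b$, it is enough to show $b_3$ avoids $f_b$; and because $\sigma_0,\sigma_2$ are letter permutations, their images of $b_3$ avoid exactly the same formulas. Avoidance of $f_b$ by $b_3$ I would verify by the standard technique: $b_3$ is the fixed point of a uniform-ish morphism with bounded preimages, so occurrences of a formula of bounded length can be pulled back, giving a finite check (or by appealing to the known combinatorial description of the factors of $b_3$ together with Theorem~\ref{thm:thue}).

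For the hard direction (ii), the natural route is to show that any recurrent ternary word $w$ avoiding $f_b$ must in fact avoid $AA$ and the two specific cubes-like factors $\texttt{010}$ and $\texttt{212}$ (up to renaming of letters), at which point Theorem~\ref{thm:thue} finishes the job. Concretely: first, $f_b$ contains the fragment $ABCAB$, and I would argue that a square $UU$ in $w$ with $|U|$ small forces, via the other fragments $ABCBA$, $ACB$, $BAC$, an occurrence of $f_b$; combined with the fact (a routine lemma, provable by a backtracking/automaton search over $\Sigma_3^*$) that a long enough square-free-ish ternary word avoiding $f_b$ cannot contain large squares either, one concludes $w$ is square-free after a bounded prefix, and being recurrent, square-free outright. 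Then, among square-free ternary words, the ones that are \emph{not} equivalent to $b_3$ (up to renaming) are exactly those containing, for every renaming, one of \texttt{010}/\texttt{212}; one shows each such forbidden short factor, when present in a recurrent word, extends (using recurrence and square-freeness to control the surrounding context) to a full occurrence of $f_b$. This last part is the crux and is essentially a finite case analysis: enumerate the square-free ternary factors up to some length $L$, and for each that is not a factor of $b_3$ (nor of $\sigma_0(b_3)$, $\sigma_2(b_3)$), exhibit the morphism realizing $f_b$. The length $L$ needed is the main quantitative obstacle, but it is bounded because square-free ternary words have only linearly many factors of each length and the relevant "bad" factors are characterized by Theorem~\ref{thm:thue}.

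**Main obstacle.** The delicate step is showing that avoiding $f_b$ forces square-freeness (after a bounded prefix). A word can contain a short square $UU$ without that square by itself creating an occurrence of $f_b$ — one must use the interaction of \emph{all four} fragments $ABCAB$, $ABCBA$, $ACB$, $BAC$, and possibly a fairly long window around the square, to force the occurrence. I expect this to require a careful, computer-assisted enumeration: bound $|U|$ (large squares are ruled out first because they contain an explicit occurrence of, say, $ABCAB.ACB$ via overlaps), then for each small $U$ over $\Sigma_3$ and each way it can sit inside a word that otherwise avoids $f_b$, derive a contradiction. Once square-freeness is in hand, the passage to $b_3$ via Theorem~\ref{thm:thue} is comparatively mechanical, amounting to checking that the "extra" square-free words (those with a \texttt{010} or \texttt{212} under every renaming) each contain an occurrence of $f_b$ — again a finite verification over a bounded set of factors.
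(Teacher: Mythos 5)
There is a genuine gap at the step you yourself flag as the main obstacle: square-freeness. The paper does not need any bounded-window analysis or case enumeration here, because recurrence gives it in two lines: if a recurrent word $w$ avoiding $f_b$ contained a square $uu$, then $uu$ occurs twice and hence $uuvuu$ is a factor of $w$ for some non-empty $v$, and the morphism $A\mapsto u$, $B\mapsto u$, $C\mapsto v$ sends every fragment of $f_b=ABCAB.ABCBA.ACB.BAC$ to a factor of $uuvuu$ --- a contradiction. Your proposed substitute does not work as stated: it is not true that a large square by itself ``contains an explicit occurrence of $ABCAB.ACB$ via overlaps'' (for instance $h(ABCAB)$ can be found inside $uu$ by splitting $u=h(A)h(B)h(C)$, but $h(ACB)$ and $h(BAC)$ need not occur there), so your plan to first bound $|U|$ and then enumerate small squares has no sound starting point, and it is unclear the enumeration would close without the recurrence trick. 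A second, smaller but real, error is in your part (i): the divisibility goes the wrong way. Since $f_b$ is divisible by each $f_i$, avoiding $f_i$ implies avoiding $f_b$; hence knowing that $b_3$ avoids $f_b$ says nothing about the $f_i$. The paper checks (with Cassaigne's algorithm) that $b_3$ avoids each of $f_1,\dots,f_4$, and avoidance of $f_b$ then follows --- not the other way around. (Your reduction of the hard direction to $f_b$ alone, by contrast, is correct and is exactly what the paper does.)

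Your endgame is also formulated in a way that would not quite go through. It is not the case that every sufficiently long square-free word that is not a factor of $b_3$, $\sigma_0(b_3)$, or $\sigma_2(b_3)$ contains an occurrence of $f_b$ (nor does a factor \texttt{010} by itself force one: $\sigma_0(b_3)$ contains \texttt{010} and avoids $f_b$), and the theorem only concerns recurrent words, so a compactness-style finite check of ``all bad factors of length $L$'' is neither obviously true nor what is needed. The paper instead first observes that, in an infinite square-free ternary word, all six permutations of \texttt{012} occur, so a basic occurrence of $f_b$ reduces to a basic occurrence of $ABCAB.ABCBA$; then a computer check shows that words avoiding $f_b$, squares, and all three factors \texttt{021020120}, \texttt{102101201}, \texttt{210212012} have length at most $159$, so by symmetry one may assume $w$ contains \texttt{021020120}; finally, the presence of \texttt{010} (forced to extend to \texttt{20102} by square-freeness) or of \texttt{212} then produces a basic occurrence of $ABCAB.ABCBA$, and Thue's theorem concludes. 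So your skeleton (reduce to $f_b$, prove square-freeness, exclude \texttt{010}/\texttt{212} up to renaming, invoke Theorem~\ref{thm:thue}) matches the paper, but both the square-freeness argument and the structure of the finite verification need to be replaced by the arguments above.
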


By considering divisibility, we can deduce that Theorem~\ref{mainth} holds for $72$ ternary formulas.
Since $b_3$, $\sigma_0(b_3)$, and $\sigma_2(b_3)$ are equivalent to their reverse,
Theorem~\ref{mainth} also holds for the $72$ reverse ternary formulas.

\begin{proof}
Using Cassaigne's algorithm~\cite{cassaignealgo}, we have checked that $b_3$ avoids $f_i$, for $1\le i\le4$.
By symmetry, $\sigma_0(b_3)$ and $\sigma_2(b_3)$ also avoid $f_i$.

Let $w$ be a ternary recurrent word $w$ avoiding $f_b$.
Suppose to get a contradiction that $w$ contains a square $uu$. 
Then there exists a non-empty word $v$ such that $uuvuu$ is a factor of $w$.
Thus, $w$ contains an occurrence of $f_b$ given by the morphism $A\mapsto u,B\mapsto u,C\mapsto v$.
This contradiction shows that $w$ is square-free.

An occurrence $h$ of a ternary formula over $\Sigma_3$ is said to be \emph{basic} if $\acc{h(A),h(B),h(C)}=\Sigma_3$.
As already noticed by Thue~\cite{Ber94}, no infinite ternary word avoids squares and \texttt{012}.
So, every infinite ternary square-free word contains the $6$ factors obtained by letter permutation of \texttt{012}.
Thus, an infinite ternary square-free word contains a basic occurrence of $f_b$ if and only if
it contains the same basic occurrence of $ABCAB.ABCBA$.
Therefore, $w$ contains no basic occurrence of $ABCAB.ABCBA$.

A computer check shows that the longest ternary words avoiding $f_b$, squares, \texttt{021020120}, \texttt{102101201}, and \texttt{210212012} have length $159$.
So we assume without loss of generality that $w$ contains \texttt{021020120}.

Suppose to get a contradiction that $w$ contains \texttt{010}.
Since $w$ is square-free, $w$ contains \texttt{20102}.
Moreover, $w$ contains the factor of \texttt{20120} of \texttt{021020120}.
So $w$ contains the basic occurrence $A\mapsto\texttt{2}$, $B\mapsto\texttt{0}$, $C\mapsto\texttt{1}$ of $ABCAB.ABCBA$.
This contradiction shows that $w$ avoids \texttt{010}.

Suppose to get a contradiction that $w$ contains \texttt{212}.
Since $w$ is square-free, $w$ contains \texttt{02120}.
Moreover, $w$ contains the factor of \texttt{021020} of \texttt{021020120}.
So $w$ contains the basic occurrence $A\mapsto\texttt{0}$, $B\mapsto\texttt{2}$, $C\mapsto\texttt{1}$ of $ABCAB.ABCBA$.
This contradiction shows that $w$ avoids \texttt{212}.

Since $w$ avoids squares, \texttt{010}, and \texttt{212}, Theorem~\ref{thm:thue} implies that $w$ is equivalent to $b_3$.
By symmetry, every ternary recurrent word avoiding $f_b$ is equivalent to $b_3$, $\sigma_0(b_3)$, or $\sigma_2(b_3)$.
\end{proof}

\section{Avoidability of $ABACA.ABCA$ and $ABAC.BACA.ABCA$}\label{sec:2}
Following the terminology in~\cite{OchemRosenfeld2016}, we say that a finite set of infinite words $\cal{M}$ \emph{essentially avoids} a formula $f$
if every infinite word over $\Sigma_{\lambda(f)}$ avoiding $f$ has the same set of recurrent factors as a word in $\cal{M}$.
In this terminology, Theorem~\ref{mainth} says that $\acc{b_3,\sigma_0(b_3), \sigma_2(b_3)}$ essentially avoids many ternary formulas.
Let $b_4$ be the fixed point of $\texttt{0}\mapsto\texttt{01}$, $\texttt{1}\mapsto\texttt{21}$, $\texttt{2}\mapsto\texttt{03}$, $\texttt{3}\mapsto\texttt{23}$,
let $b_4'$ be obtained from $b_4$ by exchanging \texttt{0} and \texttt{1}, and
let $b_4''$ be obtained from $b_4$ by exchanging \texttt{0} and \texttt{3}.
Then $\acc{b_3,b_3',b_3''}$ essentially avoids $AB.AC.BA.CA.CB$~\cite{BNT89}.
Finally, five binary formulas~\cite{OchemRosenfeld2016} are known to be essentially avoided by a finite set of binary morphic words.

Thus, every formula that is known to be avoided by polynomially many words is actually essentially avoided by a finite set of morphic words.
In this section, we show in particular that this is not the case for $ABACA.ABCA$ and $ABAC.BACA.ABCA$.

We consider the morphisms $m_a:$ $\texttt{0}\mapsto\texttt{001}$, $\texttt{1}\mapsto\texttt{101}$ and
$m_b:$ $\texttt{0}\mapsto\texttt{010}$, $\texttt{1}\mapsto\texttt{110}$. That is, $m_a(x)=x\texttt{01}$ and $m_b(x)=x\texttt{10}$ for every $x\in\Sigma_2$.


We construct the set $S$ of binary words as follows:
\begin{itemize}
 \item $\texttt{0}\in S$.
 \item If $v\in S$, then $m_a(v)\in S$ and $m_b(v)\in S$.
 \item If $v\in S$ and $v'$ is a factor of $v$, then $v'\in S$.
\end{itemize}

\begin{theorem}
\label{polyth}
Let $f\in\acc{ABACA.ABCA,ABAC.BACA.ABCA}$. The set of words $u$ such that $u$ is recurrent in an infinite binary word avoiding $f$ is $S$.
\end{theorem}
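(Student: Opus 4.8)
The plan is to prove the two inclusions $S\subseteq\{\text{recurrent factors of infinite binary words avoiding }f\}$ and the reverse separately, for each of the two formulas $f$. Note first a reduction: $ABACA.ABCA$ is divisible by nothing relevant here, but every word avoiding $ABAC.BACA.ABCA$ avoids $ABACA.ABCA$ only if the latter divides the former — in fact one checks directly that $ABACA.ABCA$ is divisible by $ABAC.BACA.ABCA$ (map $A\mapsto A$, $B\mapsto B$, $C\mapsto C$: the fragment $ABAC$ sits inside $ABACA$, $BACA$ sits inside $ABACA$, and $ABCA$ is the second fragment), so avoiding $ABAC.BACA.ABCA$ is a priori weaker. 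Hence it suffices to show: (i) every word whose recurrent factors all lie in $S$ avoids $ABAC.BACA.ABCA$ (the stronger-to-avoid, i.e.\ easier-to-produce direction would actually be the other formula, so one must be careful which inclusion needs which formula); and (ii) every infinite binary word avoiding $ABACA.ABCA$ has all its recurrent factors in $S$. Since $S$ is the same target set for both formulas, combining (i) for $ABAC.BACA.ABCA$ with (ii) for $ABACA.ABCA$ and the divisibility relation pins down both.

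For the direction ``$S$ is avoiding'', i.e.\ constructing infinite words avoiding $f$ whose recurrent factors exhaust $S$: the morphisms $m_a,m_b$ are uniform of length $3$ and $S$ is by construction closed under taking factors and under applying $m_a,m_b$. I would show that for any infinite binary word $x$ that itself avoids $ABACA.ABCA$ (e.g.\ take $x$ to be a fixed point, or any word produced by iterating $m_a$), the image $m_a(x)$ (and $m_b(x)$) still avoids $ABACA.ABCA$ — a synchronization argument: because $m_a(0)=001$, $m_a(1)=101$ both end in $\texttt{01}$ and have distinct first letters, the morphism is injective and one can locate block boundaries; then an occurrence of $ABACA.ABCA$ in the image, being of bounded ``defect'' from a block-aligned occurrence, must either be short (finite case-check) or pull back to an occurrence of $f$ in $x$. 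Iterating, every $v\in S$ occurs as a factor of some $m_{w_1}\circ\cdots\circ m_{w_k}(\texttt{0}\cdots)$, and by a standard recurrence/compactness argument one extracts a single infinite avoiding word (or a family of them, parametrized by infinite sequences over $\{a,b\}$) containing any prescribed finite subset of $S$ recurrently. That the family has infinitely many distinct recurrent-factor sets follows because different infinite $\{a,b\}$-sequences (that are themselves ``generic'' enough) give words with different factor sets — the desired non-essential-avoidance conclusion.

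For the converse — every infinite binary word $w$ avoiding $ABACA.ABCA$ has all recurrent factors in $S$ — I would argue by a bounded backtracking / ``tree of factors'' computer-assisted search, in the style of the proof of Theorem~\ref{mainth}: first establish by a finite check that $w$ cannot contain certain short forbidden patterns (e.g.\ $\texttt{000}$, $\texttt{111}$ would give an occurrence with $A$ a single letter, and more elaborate short factors force occurrences of $ABACA.ABCA$), so that $w$ over a bounded window looks like the image under $m_a$ or $m_b$ of a shorter word; then show this shorter word again avoids $f$, giving a desubstitution step; iterating, every long enough recurrent factor of $w$ desubstitutes down to $\texttt{0}$ or $\texttt{1}$, i.e.\ lies in $S$. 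The key lemma is the desubstitution step: proving that a sufficiently long factor of any binary word avoiding $ABACA.ABCA$ admits a unique factorization over $\{\texttt{001},\texttt{101},\texttt{010},\texttt{110}\}$-type blocks (equivalently, is synchronizing for $\{m_a,m_b\}$), and that removing the substitution preserves avoidance.

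The main obstacle I anticipate is exactly this desubstitution/synchronization lemma in the converse direction: unlike in Theorem~\ref{mainth}, where Thue's theorem does the heavy lifting once squares and two short factors are excluded, here there is no off-the-shelf structure theorem, so one must carve out, by finite computation, a list of excluded short factors strong enough that (a) the list is finite and verifiable by a bounded-length search (the paper's method gives an explicit longest word, as in the ``length $159$'' computation), and (b) exclusion of that list already forces the block structure. Making the inductive desubstitution actually close — i.e.\ checking that the pulled-back word avoids $f$ and not merely some weaker condition, and handling the two formulas $ABACA.ABCA$ and $ABAC.BACA.ABCA$ simultaneously via the divisibility observation — is where the real case analysis lives; the constructions with $m_a,m_b$ and the extraction of the infinite family are comparatively routine.
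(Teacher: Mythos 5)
Your overall plan is the paper's: writing $R$ (resp.\ $R'$) for the set of recurrent factors of infinite binary words avoiding $ABACA.ABCA$ (resp.\ $ABAC.BACA.ABCA$), divisibility gives $R'\subseteq R$, a desubstitution argument with a finite computer check gives $R\subseteq S$, and a synchronization argument for the uniform morphisms $m_a,m_b$ gives $S\subseteq R'$, closing the cycle $R'\subseteq R\subseteq S\subseteq R'$. Two things need fixing, though. First, you state the divisibility the wrong way round (``avoiding $ABAC.BACA.ABCA$ is a priori weaker'' --- it is the stronger condition: any occurrence of $ABACA.ABCA$ is in particular an occurrence of $ABAC.BACA.ABCA$), and, more importantly, in your constructive paragraph you propose to prove that $m_a$ preserves avoidance of $ABACA.ABCA$. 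That would only give $S\subseteq R$, hence $R=S$; the theorem also needs $S\subseteq R'$, and divisibility cannot upgrade $S\subseteq R$ to $S\subseteq R'$ (it only gives $R'\subseteq R$). The induction has to be run, as in the paper, on the three-fragment formula: if a finite $w\in S$ avoids $ABAC.BACA.ABCA$, then so do $m_a(w)$ and $m_b(w)$; this is the statement that both closes the induction and covers $ABACA.ABCA$ for free. Your plan item (i) states the right target, but the sketch that follows proves the wrong lemma.

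Second, your dichotomy ``an occurrence in the image is either short or pulls back'' is not the actual case structure, and the missing middle is where the real work is. The paper's argument is: factors of length at least $3$ or in $\{\texttt{00},\texttt{01},\texttt{11}\}$ (``gentle'') occur at a fixed position mod $3$ in $m_a(w)$; if $h(A)$ is gentle one gets $|h(A)|\equiv|h(B)|\equiv|h(C)|\equiv0\pmod 3$ and the occurrence pulls back to $w$; but the mixed cases --- e.g.\ $h(B)$ gentle while $h(C)=\texttt{1}$, $h(A)=\texttt{0}$ --- are neither short nor pull back: they are excluded because $h(BAC)=h(B)\texttt{01}$ and $h(BCA)=h(B)\texttt{10}$ would force \texttt{01} and \texttt{10} to occur at the same position mod $3$, which is impossible; only when none of $h(A),h(B),h(C)$ is gentle is the total length at most $8$ and a finite check applies. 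As written, your sketch has no argument for occurrences with, say, $|h(A)|=1$ and $h(B)$ long, so you need this position-parity step (and note it uses the specific fragments $ABAC$, $BACA$, $ABCA$). Minor further points: in the converse direction the paper first uses recurrence to pass from the forbidden pattern $ABAAA$ to forbidding $AAA$ before the bounded search (length $53$), and the desubstitution preservation you flag as delicate is actually trivial (morphic images of occurrences are occurrences); your assessment that the converse is the main obstacle is, if anything, inverted --- the delicate part is the mod-$3$ case analysis above.
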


\begin{proof}

Let $R$ be the set of words $u$ such that $u$ is recurrent in an infinite binary word avoiding $ABACA.ABCA$.
Let $R'$ be the set of words $u$ such that $u$ is recurrent in an infinite binary word avoiding $ABAC.BACA.ABCA$.
An occurrence of $ABACA.ABCA$ is also an occurrence of $ABAC.BACA.ABCA$, so that $R'\subseteq R$.

Let us show that $R\subseteq S$.
We study the small factors of a recurrent binary word $w$ avoiding $ABACA.ABCA$.
Notice that $w$ avoid the pattern $ABAAA$ since it contains the occurrence $A\mapsto A$, $B\mapsto B$, $C\mapsto A$ of $ABACA.ABCA$.
Since $w$ contains recurrent factors only, $w$ also avoids $AAA$.


A computer check shows that the longest binary words avoiding $ABACA.ABCA$, $AAA$, \texttt{1001101001}, and \texttt{0110010110} have length $53$.
So we assume without loss of generality that $w$ contains \texttt{1001101001}.

Suppose to get a contradiction that $w$ contains \texttt{1100}.
Since $w$ avoids $AAA$, $w$ contains \texttt{011001}.
Then $w$ contains the occurrence $A\mapsto\texttt{01},B\mapsto\texttt{1},C\mapsto\texttt{0}$ of $ABACA.ABCA$.
This contradiction shows that $w$ avoids \texttt{1100}.

Since $w$ contains \texttt{0110}, the occurrence $A\mapsto\texttt{0},B\mapsto\texttt{1},C\mapsto\texttt{1}$ of $ABACA.ABCA$ shows that $w$ avoids \texttt{01010}.
Similarly, $w$ contains \texttt{1001} and avoids \texttt{10101}.

Suppose to get a contradiction that $w$ contains \texttt{0101}.
Since $w$ avoids \texttt{01010} and \texttt{10101}, $w$ contains \texttt{001011}.
Moreover, $w$ avoids $AAA$, so $w$ contains \texttt{10010110}.
Then $w$ contains the occurrence $A\mapsto\texttt{10},B\mapsto\texttt{0},C\mapsto\texttt{1}$ of $ABACA.ABCA$.
This contradiction shows that $w$ avoids \texttt{0101}.

So $w$ avoids every factor in $\acc{\texttt{000},\texttt{111},\texttt{0101},\texttt{1100}}$.
Thus, it is to check that if we extend any factor \texttt{01} in $w$ to three letters to the right,
we get either \texttt{01001} or \texttt{01101}, that is, $\texttt{01}x\texttt{01}$ with $x\in\Sigma_2$.
This implies that $w$ is the $m_a$-image of some binary word.

Obviously, the image by a non-erasing morphism of a word containing a formula also contains the formula.
Thus, the pre-image of $w$ by $m_a$ also avoids $ABACA.ABCA$. This shows that $R\subseteq S$.

Let us show that $S\subseteq R'$, that is, every word in $S$ avoids $ABAC.BACA.ABCA$.
We suppose to get a contradiction that a finite word $w\in S$ avoids $ABAC.BACA.ABCA$ and that $m_a(w)$ contains an occurrence $h$ of $ABAC.BACA.ABCA$.

If we write $w=w_0w_1w_2w_3\ldots$, then the word
$m_a(w)=w_0\texttt{01}w_1\texttt{01}w_2\texttt{01}w_3\texttt{01}\ldots$ is such that:
\begin{itemize}
\item Every factor \texttt{00} occurs at position $0\pmod{3}$.
\item Every factor \texttt{01} occurs at position $1\pmod{3}$.
\item Every factor \texttt{11} occurs at position $2\pmod{3}$.
\item Every factor \texttt{10} occurs at position $0$ or $2\pmod{3}$, depending on whether a factor $\texttt{1}w_i\texttt{0}$ is \texttt{100} or \texttt{110}.
\end{itemize}
We say that a factor $s$ is \emph{gentle} if either $|s|\ge3$ or $s\in\acc{\texttt{00},\texttt{01},\texttt{11}}$.
By previous remarks, all the occurrences of the same gentle factor have the same position modulo 3.

First, we consider the case such that $h(A)$ is gentle.
This implies that the distance between two occurrences of $h(A)$ is $0\pmod{3}$.
Because of the repetitions $h(ABA)$, $h(ACA)$, and $h(ABCA)$ are contained in the formula, we deduce that
\begin{itemize}
\item $|h(AB)|=|h(A)|+|h(B)|\equiv0\pmod{3}$.
\item $|h(AC)|=|h(A)|+|h(C)|\equiv0\pmod{3}$.
\item $|h(ABC)|=|h(A)|+|h(B)+|h(C)|\equiv0\pmod{3}$.
\end{itemize}
This gives $|h(A)|\equiv|h(B)|\equiv|h(C)|\equiv0\pmod{3}$.
Clearly, such an occurrence of the formula in $m_a(w)$ implies an occurrence of the formula in $w$, which is a contradiction.

Now we consider the case such that $h(B)$ is gentle.
If $h(CA)$ is also gentle, then the factors $h(BACA)$ and $h(BCA)$ imply that $|h(A)|\equiv0\pmod{3}$. Thus, $h(A)$ is gentle and the first case applies.
If $h(CA)$ is not gentle, then $h(CA)=\texttt{10}$, that is, $h(C)=\texttt{1}$ and $h(A)=\texttt{0}$.
Thus, $m_a(w)$ contains both $h(BAC)=h(B)\texttt{01}$ and $h(BCA)=h(B)\texttt{10}$. 
Since $h(B)$ is gentle, this implies that \texttt{01} and \texttt{10} have the same position modulo $3$, which is impossible.

The case such that $h(C)$ is gentle is symmetrical.
If $h(AB)$ is gentle, then $h(ABAC)$ and $h(ABC)$ imply that $|h(A)|\equiv0\pmod{3}$.
If $h(AB)$ is not gentle, then $h(A)=\texttt{1}$ and $h(B)=\texttt{0}$. Thus, $m_a(w)$ contains both $h(ABC)=\texttt{01}h(C)$ and $h(BAC)=\texttt{10}h(C)$.
Since $h(C)$ is gentle, this implies that \texttt{01} and \texttt{01} have the same position modulo $3$, which is impossible.

Finally, if $h(A)$, $h(B)$, and $h(C)$ are not gentle, then the length of the three fragments of the formula is $2|h(A)|+|h(B)|+|h(C)|\le 8$.
So it suffices to consider the factors of length at most $8$ in $S$ to check that no such occurrence exists.

This shows that $S\subseteq R'$. Since $R'\subseteq R\subseteq S\subseteq R'$, we obtain $R'=R=S$, which proves Theorem~\ref{polyth}. 
\end{proof}

\begin{corollary}
Neither $ABACA.ABCA$ nor $ABAC.BACA.ABCA$ is essentially avoided by a finite set of morphic words.
\end{corollary}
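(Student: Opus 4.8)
The plan is to argue by contradiction. Suppose a finite set $\mathcal{M}$ of morphic words essentially avoids $f$, where $f$ denotes either $ABACA.ABCA$ or $ABAC.BACA.ABCA$. By Theorem~\ref{polyth} we have $\lambda(f)=2$, and both formulas share the same set $S$ of words that occur as a recurrent factor of some infinite binary word avoiding $f$. For an infinite word $w$ write $\mathrm{Rec}(w)$ for its set of recurrent factors and $\mathrm{Fac}(w)$ for its set of all factors. By the definition of essential avoidance, $\{\mathrm{Rec}(w):w\text{ infinite binary and }w\text{ avoids }f\}$ is then contained in the finite set $\{\mathrm{Rec}(u):u\in\mathcal{M}\}$, so it suffices to construct infinitely many infinite binary words avoiding $f$ with pairwise distinct sets of recurrent factors.

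I would build these words as limits of compositions of $m_a$ and $m_b$. For a sequence $\alpha=\alpha_1\alpha_2\cdots\in\{a,b\}^{\mathbb N}$, set $w_\alpha=\lim_{n\to\infty}m_{\alpha_1}\cdots m_{\alpha_n}(\texttt{0})$. Since $m_a(\texttt{0})$ and $m_b(\texttt{0})$ start with \texttt{0}, the words $m_{\alpha_1}\cdots m_{\alpha_n}(\texttt{0})$ form a chain of prefixes of unbounded length, so $w_\alpha$ is a well-defined infinite binary word and, being the limit, each of its factors lies in some $m_{\alpha_1}\cdots m_{\alpha_n}(\texttt{0})\in S$, hence in $S$. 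It is shown in the proof of Theorem~\ref{polyth} that every word of $S$ avoids $f$; since an occurrence of $f$ in $w_\alpha$ would already appear inside a finite prefix of $w_\alpha$, and that prefix is in $S$, the word $w_\alpha$ avoids $f$. Moreover $w_\alpha$ is uniformly recurrent: $m_a$ and $m_b$ each send every letter to a word containing both letters, so any factor $u$ of $w_\alpha$ occurs in some $m_{\alpha_1}\cdots m_{\alpha_n}(\texttt{0})$, hence in $m_{\alpha_1}\cdots m_{\alpha_{n+1}}(x)$ for both $x\in\Sigma_2$, and $w_\alpha$ is an infinite concatenation of such blocks, so $u$ reappears with gaps bounded by $2\cdot3^{n+1}$. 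In particular $\mathrm{Rec}(w_\alpha)=\mathrm{Fac}(w_\alpha)$, so it is enough to show that $\alpha\mapsto\mathrm{Fac}(w_\alpha)$ takes infinitely many values.

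The key step is to recover $\alpha$ from $\mathrm{Fac}(w_\alpha)$. First, $\alpha_1$ can be read off. Writing $\alpha'=\alpha_2\alpha_3\cdots$: if $\alpha_1=a$ then $w_\alpha=m_a(w_{\alpha'})$ and, since $m_a(x)=x\texttt{01}$, every position of $w_\alpha$ that is $\equiv1\pmod{3}$ carries \texttt{0} and every position $\equiv2\pmod{3}$ carries \texttt{1}; hence the factor \texttt{01} occurs exactly at the positions $\equiv1\pmod{3}$, whereas \texttt{10} occurs both at positions $\equiv0$ and at positions $\equiv2\pmod{3}$ (because $w_{\alpha'}$ contains both letters). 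If $\alpha_1=b$ the roles of \texttt{01} and \texttt{10} are exchanged. Now ``the factor $s$ occurs in $w_\alpha$ within a single residue class modulo $3$'' is visible from $\mathrm{Fac}(w_\alpha)$ alone: it is equivalent to $3$ dividing $\gcd\{|z|-|s| : z\in\mathrm{Fac}(w_\alpha),\ s$ is both a prefix and a suffix of $z,\ |z|>|s|\}$, the set of distances between occurrences of $s$. So $\alpha_1=a$ if and only if $3$ divides that $\gcd$ for $s=\texttt{01}$, which pins down $\alpha_1$. Once $\alpha_1$ is known, the block boundaries of $w_\alpha=m_{\alpha_1}(w_{\alpha'})$ are determined too --- for $\alpha_1=a$ they are precisely the positions $p$ with $w_\alpha[p{+}1]\,w_\alpha[p{+}2]=\texttt{01}$, and one checks that the blocks \texttt{001} and \texttt{101} can start only at such positions --- and then a word $v=v_0v_1\cdots v_{\ell-1}$ with $\ell\ge2$ is a factor of $w_{\alpha'}$ if and only if $v_0\texttt{01}v_1\texttt{01}\cdots\texttt{01}v_{\ell-1}$ is a factor of $w_\alpha$, so $\mathrm{Fac}(w_{\alpha'})$ is determined by $\mathrm{Fac}(w_\alpha)$. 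Iterating this desubstitution recovers every $\alpha_n$, hence all of $\alpha$; thus $\alpha\mapsto\mathrm{Fac}(w_\alpha)$ is injective, so it has uncountably many values, contradicting the finiteness of $\{\mathrm{Rec}(u):u\in\mathcal{M}\}$.

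I expect the desubstitution in the third paragraph to be the delicate part: one must justify carefully that, given $\alpha_1$, the blocks of $w_\alpha$ occupy unambiguously determined positions, so that passing from $\mathrm{Fac}(w_\alpha)$ to $\mathrm{Fac}(w_{\alpha'})$ is correct. The residue-class-and-$\gcd$ observation is exactly the device that makes the block structure of $w_\alpha$ readable from its factor set, and once it is in place the remaining verifications are routine. If one prefers to produce only infinitely many --- rather than uncountably many --- distinct sets of recurrent factors, it suffices to apply the same decoding to the particular sequences $\alpha=a^nb^\omega$ for $n=0,1,2,\dots$.
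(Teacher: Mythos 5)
Your proposal is correct, but it proves the corollary by a genuinely different route than the paper. The paper's proof is a short counting argument: by construction of $S$, the number $c(n)$ of words of length $n$ in $S$ satisfies $c(n)=2\sum_{0\le i\le2}c\paren{\ceil{(n-i)/3}}$ for $n\ge8$, hence $c(n)=\Theta\paren{n^{1+\ln2/\ln3}}$, and Devyatov's theorem (the factor complexity of a morphic word is $O(n\ln n)$ or $\Theta(n^{1+1/k})$ with $k$ an integer) then rules out covering $S$ by the factors of finitely many morphic words. You instead construct, for every $\alpha\in\{a,b\}^{\mathbb{N}}$, the limit word $w_\alpha$ of $m_{\alpha_1}\cdots m_{\alpha_n}(\texttt{0})$, observe that all its factors lie in $S$ (so it avoids both formulas, by the fact, established inside the proof of Theorem~\ref{polyth}, that every word of $S$ avoids $ABAC.BACA.ABCA$ and hence $ABACA.ABCA$), check uniform recurrence, and then recover $\alpha$ from $\mathrm{Fac}(w_\alpha)$: for $\alpha_1=a$ the factor \texttt{01} occurs only at positions $\equiv1\pmod{3}$ while \texttt{10} occupies two residue classes (symmetrically for $\alpha_1=b$), single-class occupancy is indeed readable off the factor set via your gcd-of-return-length criterion, and an occurrence of $v_0\texttt{01}v_1\texttt{01}\cdots\texttt{01}v_{\ell-1}$ is automatically block-aligned because its second and third letters force its starting position to be $\equiv0\pmod{3}$; iterating the desubstitution gives injectivity of $\alpha\mapsto\mathrm{Fac}(w_\alpha)$. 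I checked the step you flagged as delicate and it does go through; note that your mod-$3$ bookkeeping is essentially the ``gentle factor'' position analysis already present in the proof of Theorem~\ref{polyth}. What each approach buys: yours yields a strictly stronger conclusion --- uncountably many avoiding binary words with pairwise distinct sets of recurrent factors, so no finite set of infinite words at all (morphic or not) essentially avoids these formulas, which matches the stronger claim made in the paper's introduction --- and it needs no appeal to Devyatov's theorem; the paper's argument buys brevity (a two-line complexity computation plus a citation) at the cost of applying only to morphic words.
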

\begin{proof}
Let $c(n)=\abs{S\cup\Sigma_2^n}$ denote the number of words of length $n$ in $S$.
By construction of~$S$,
$$c(n)=2\sum_{0\le i\le2}c\paren{\ceil{\tfrac{n-i}3}}\text{ for every }n\ge8.$$

Thus $c(n)=\Theta\paren{n^{\ln 6/\ln 3}}=\Theta\paren{n^{1+\ln 2/\ln 3}}$.
Devyatov~\cite{Devyatov} has recently shown that the factor complexity (i.e. the number of factors of length $n$) of a morphic word
is either $O\paren{n\ln(n)}$ or $\Theta\paren{n^{1+1/k}}$ for some integer $k\ge1$.
Thus, $S$ cannot be the union of the factors of a finite number of morphic words.
\end{proof}

\section{Ternary nice formulas}\label{sec:nice}

Clark~\cite{Clark} introduced the notion of \emph{$n$-avoidance basis} for formulas,
which is the smallest set of formulas with the following property:
for every $i\le n$, every avoidable formula with $i$ variables is divisible by at least one formula
with at most $i$ variables in the $n$-avoidance basis.
See~\cite{Clark,circular} for more discussions about the $n$-avoidance basis.
The avoidability index of every formula in the $3$-avoidance basis has been determined:
\begin{itemize}
 \item $AA$ ($\lambda=3$~\cite{Thue06})
 \item $ABA.BAB$ ($\lambda=3$~\cite{Cassaigne1994})
 \item $ABCA.BCAB.CABC$ ($\lambda=3$~\cite{circular})
 \item $ABCBA.CBABC$ ($\lambda=2$~\cite{circular})
 \item $ABCA.CABC.BCB$ ($\lambda=3$~\cite{circular})
 \item $ABCA.BCAB.CBC$ ($\lambda=3$, reverse of $ABCA.CABC.BCB$)
 \item $AB.AC.BA.CA.CB$ ($\lambda=4$~\cite{BNT89})
\end{itemize}


Recall that a formula $f$ is \emph{nice} if for every variable $X$ of $f$,
there exists a fragment of $f$ that contains $X$ at least twice.
Every formula in the $3$-avoidance basis except $AB.AC.BA.CA.CB$
is both nice and $3$-avoidable. This raised the question in~\cite{circular} whether every
nice formula is $3$-avoidable, which would generalize the $3$-avoidability of doubled patterns.
In this section, we answer this question positively for ternary formulas.

\begin{theorem}
\label{nice}
Every nice formula with at most $3$ variables is $3$-avoidable.
\end{theorem}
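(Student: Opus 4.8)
The plan is to prove Theorem~\ref{nice} by a combination of a structural reduction to finitely many cases and an application of known avoidability tools, most notably the power of the $3$-avoidance basis and the technique used to prove that every doubled pattern is $3$-avoidable. First I would observe that if a nice formula $f$ with at most $3$ variables is divisible by some $3$-avoidable formula, then $f$ is $3$-avoidable; so it suffices to understand the nice formulas that are \emph{minimal} with respect to divisibility among nice formulas. Since every variable of a nice formula appears at least twice (indeed twice in a common fragment), each fragment containing a doubled occurrence of $X$ contains a factor of the form $X w X$; the minimal such fragments are very short, and after deleting from each fragment everything that is not needed to witness niceness one is left with a small finite list of candidate ``skeleton'' formulas to analyze. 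The one-variable case is just $AA$, which is $3$-avoidable by Thue; the two-variable case is covered by the classification of binary formulas in~\cite{OchemRosenfeld2016}, so the real content is the three-variable case.

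For the three-variable case, the key steps in order are: (1) enumerate, up to renaming variables, reversal, and permutation of fragments, the nice ternary formulas that are not divisible by any strictly smaller nice formula and not divisible by any of the manifestly $3$-avoidable members of the $3$-avoidance basis listed above; this enumeration should be finite and can be organized by the multiset of ``doubling patterns'' $X\cdots X$ contributed by each variable. (2) For each surviving formula $f$, exhibit an explicit $3$-avoiding infinite word, which in practice means finding a uniform morphism $g:\Sigma_3^*\to\Sigma_3^*$ (or a morphic image of a square-free word) such that the image of any square-free ternary word avoids $f$; the correctness of such a construction is verified by the standard finite check on factors up to a bounded length, exactly as Cassaigne's algorithm~\cite{cassaignealgo} and the computer checks already used in Section~\ref{sec:b3} do. (3) Where a single morphism does not suffice for all of the remaining formulas, group the formulas so that one morphism handles several of them simultaneously, keeping the number of distinct constructions small.

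A cleaner alternative, which I would pursue in parallel, is to avoid a long case analysis by proving a general lemma in the spirit of the doubled-pattern result of~\cite{O16}: if $f$ is nice with at most $3$ variables, then $f$ is divisible by a doubled pattern on at most $3$ variables \emph{or} by $ABA.BAB$ \emph{or} by one of a short explicit list, each item of which is $3$-avoidable. The point is that a nice formula differs from a doubled pattern only in that a variable may be doubled in one fragment while occurring singly in others; collapsing those single occurrences (or arguing that they can be absorbed) should reduce niceness to the doubled-pattern case up to divisibility by a bounded set of exceptions. If this reduction goes through, the theorem follows immediately from the $3$-avoidability of doubled patterns together with the finitely many checked exceptions.

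The hard part will be making the enumeration in step~(1) genuinely complete and simultaneously showing that every enumerated formula is either divisible by something already known to be $3$-avoidable or admits an explicit $3$-avoiding morphism; in particular, identifying the right morphisms for the ``stubborn'' formulas (those not captured by the doubled-pattern argument) and certifying them by a finite computation is where the real work lies, and it is also where an error would most likely hide. I would therefore structure the write-up so that the enumeration is presented as a lemma with a verifiable finite list, the divisibility reductions are tabulated explicitly, and for each remaining formula the morphism and the length bound for the finite check are stated precisely, mirroring the style of the proofs of Theorem~\ref{mainth} and Theorem~\ref{polyth}.
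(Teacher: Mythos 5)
Your plan follows essentially the same route as the paper: reduce by divisibility to minimal nice formulas, bound their size (the paper shows a minimal nice formula on $n\ge 2$ variables has at most $n$ fragments, each of length at most $2^{n-1}+1$ since a longer fragment would contain a doubled pattern), enumerate the finitely many ternary cases outside the $3$-avoidance basis, and dispose of each by an explicit construction certified by a finite check ($b_3$ via Cassaigne's algorithm, or images of repetition-free words under explicit uniform morphisms). The only notable difference is that the paper's explicit avoiding words are morphic images of Dejean words over $\Sigma_4$ and $\Sigma_5$ rather than of ternary square-free words, a technical choice within the same framework.
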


We say that a nice formula is minimal if it is not divisible by another nice formula with at most the same number of variables.
The following property of every minimal nice formula is easy to derive.
If a variable $V$ appears as a prefix of a fragment $\phi$, then
\begin{itemize}
 \item $V$ is also a suffix of $\phi$,
 \item $\phi$ contains exactly two occurrences of $V$,
 \item $V$ is neither a prefix nor a suffix of any fragment other than $\phi$,
 \item Every fragment other than $\phi$ contains at most one occurrence of $V$.
\end{itemize}

Thus, if $f$ is a minimal nice formula with $n\ge2$ variables, then $f$ has at most $n$ fragments.
Moreover, every fragment has length at most $2+2^{n-1}-1=2^{n-1}+1$, since otherwise it would contain
a doubled pattern as a factor.

This implies an algorithm to list the minimal nice formulas with at most $n$ variables.
The table below lists the formulas that need to be shown $3$-avoidable, that is, 
the minimal nice formulas with at most $3$ variables that do not belong to the $3$-avoidance basis.
Also, if two distinct formulas are the reverse of each other, then only one of them appears in the table
and the given avoiding word avoids both formulas.
Some of these formulas are avoided by $b_3$ and the proof uses Cassaigne's algorithm~\cite{cassaignealgo} as in Section~\ref{sec:b3}.
The other formulas are each avoided by the image by a uniform morphism
of either any infinite $\paren{\frac54^+}$-free word $w_5$ over $\Sigma_5$
or any infinite $\paren{\frac75^+}$-free word $w_4$ over $\Sigma_4$. We refer to~\cite{Ochem2004,O16}
for details about the technique to prove avoidance with morphic images of Dejean words.\\

\begin{center}
\begin{tabular}{|l|l|l|l|}
\hline
Formula & Closed under & Avoidability & Avoiding\\
 & reversal? & exponent & word\\ \hline
$ABA.BCB.CAC$ & yes & 1.5 & $b_3$\\ \hline
$ABCA.BCAB.CBAC$ & no & 1.333333333 & $b_3$\\\hline
$ABCA.BAB.CAC$ & yes & 1.414213562 & $g_v(w_4)$\\ \hline
$ABCA.BAB.CBC$ & no & 1.430159709 & $g_w(w_4)$\\ \hline
$ABCA.BAB.CBAC$ & no & 1.381966011 & $g_x(w_5)$\\ \hline
$ABCBA.CABC$ & no & 1.361103081 & $g_y(w_5)$\\ \hline
$ABCBA.CAC$ & yes & 1.396608253 & $g_z(w_5)$\\ \hline
\end{tabular}
\end{center}




\noindent
\begin{minipage}[b]{0.21\linewidth}
\centering
$$\begin{array}{c}
 g_v\\
 0\to\texttt{01220},\\
 1\to\texttt{01110},\\
 2\to\texttt{00212},\\ 
 3\to\texttt{00112}.\\ 
\end{array}$$
\end{minipage}
\begin{minipage}[b]{0.2\linewidth}
\centering
$$\begin{array}{c}
 g_w\\
 0\to\texttt{02111},\\
 1\to\texttt{01121},\\
 2\to\texttt{00222},\\ 
 3\to\texttt{00122}.\\ 
\end{array}$$
\end{minipage}
\begin{minipage}[b]{0.2\linewidth}
\centering
$$\begin{array}{c}
 g_x\\
 0\to\texttt{021110},\\
 1\to\texttt{012221},\\
 2\to\texttt{011120},\\ 
 3\to\texttt{002211},\\
 4\to\texttt{001122}.\\ 
\end{array}$$
\end{minipage}
\begin{minipage}[b]{0.18\linewidth}
\centering
$$\begin{array}{c}
 g_y\\
 0\to\texttt{022},\\
 1\to\texttt{021},\\
 2\to\texttt{012},\\ 
 3\to\texttt{011},\\
 4\to\texttt{000}.\\ 
\end{array}$$
\end{minipage}
\begin{minipage}[b]{0.2\linewidth}
\centering
$$\begin{array}{c}
 g_z\\
 0\to\texttt{120201},\\
 1\to\texttt{100002},\\
 2\to\texttt{022221},\\ 
 3\to\texttt{011112},\\
 4\to\texttt{001122}.\\ 
\end{array}$$
\end{minipage}

\section{A counter-example to a conjecture of Grytczuk}\label{sec:disprove}

Grytczuk~\cite{G07} considered the notion of pattern avoidance on graphs. This generalizes the definition
of nonrepetitive coloring, which corresponds to the pattern $AA$.
Given a pattern $p$ and a graph $G$, the avoidability index $\lambda(p,G)$ is the smallest number of colors
needed to color the vertices of $G$ such that every non-intersecting path in $G$ induces a word avoiding $p$.

We think that the natural framework is that of directed graphs, and we consider only non-intersecting paths that are oriented
from a starting vertex to an ending vertex. This way, $\lambda(p)=\lambda\paren{p,\overrightarrow{P}}$ where $\overrightarrow{P}$
is the infinite oriented path with vertices $v_i$ and arcs $\overrightarrow{v_iv_{i+1}}$, for every $i\ge0$.
The directed graphs that we consider have no loops and no multiple arcs, since they do not modify the set of non-intersecting oriented paths.
However, opposite arcs (i.e., digons) are allowed. Thus, an undirected graph is viewed as a symmetric directed graph: for every pair of distinct vertices
$u$ and $v$, either there exists no arc between $u$ and $v$, or there exist both the arcs $\overrightarrow{uv}$ and $\overrightarrow{vu}$.
Let $P$ denote the infinite undirected path. We are nitpicking about directed graphs because, even though
$\lambda\paren{AA,\overrightarrow{P}}=\lambda(AA,P)=3$, there exist patterns such that 
$\lambda\paren{p,\overrightarrow{P}}<\lambda(p,P)$. For example, $\lambda(ABCACB)=\lambda\paren{ABCACB,\overrightarrow{P}}=2$ and $\lambda(ABCACB,P)=3$.

We do not attempt the hazardous task of defining a notion of avoidance for formulas on graphs.

A conjecture of Grytczuk~\cite{G07} says that for every avoidable pattern $p$, there exists a function $g$ such that $\lambda(p,G)\le g(\Delta(G))$,
where $G$ is an undirected graph and $\Delta(G)$ denotes its maximum degree. Grytczuk~\cite{G07} obtained that his conjecture holds for doubled patterns.

As a counterexample, we consider the pattern $ABACADABCA$ which is $2$-avoidable by the result in Section~\ref{sec:2}.
Of course, $ABACADABCA$ is not doubled because of the isolated variable $D$.
Let us show that $ABACADABCA$ is unavoidable on the infinite oriented graph $\overrightarrow{G}$ with vertices $v_i$ and arcs $\overrightarrow{v_iv_{i+1}}$
and $\overrightarrow{v_{100i}v_{100i+2}}$, for every $i\ge0$. Notice that $\overrightarrow{G}$ is obtained from $\overrightarrow{P}$
by adding the arcs $\overrightarrow{v_{100i}v_{100i+2}}$.
The constant $100$ in the construction is arbitrary and can be replaced by any constant.

Suppose that $\overrightarrow{G}$ is colored with $k$ colors.
Consider the factors in the subgraph $\overrightarrow{P}$ induced by the paths from $v_{300ik+1}$ to $v_{300ik+200k+1}$, for every $i\ge0$.
Since these factors have bounded length, the same factor appears on two disjoint such paths $p_l$ and $p_r$ (such that $p_l$ is on the left of $p_r$).
Notice that $p_l$ contains $2k+1$ vertices with index $\equiv 1\pmod{100}$.
By the pigeon-hole principle, $p_l$ contains three such vertices with the same color $a$.
Thus, $p_l$ contains an occurrence of $ABACA$ such that $A\mapsto a$ on vertices with index $\equiv 1\pmod{100}$.
The same is true for $p_r$. In $\overrightarrow{G}$, the occurrences of $ABACA$ in $p_l$ and $p_r$ imply an occurrence of $ABACADABCA$
since we can skip an occurrence of the variable $A$ in $p_l$ thanks to some arc of the form $\overrightarrow{v_{100j}v_{100j+2}}$.

This shows that $ABACADABCA$ is unavoidable on $\overrightarrow{G}$. So Grytczuk's conjecture is disproved since $\overrightarrow{G}$ has maximum degree $3$.
It is also a counterexample to Conjecture 6 in~\cite{grasshopper} which states that every avoidable pattern is avoidable on the infinite graph
with vertices $\acc{v_0,v_1,\ldots}$ and the arcs $\overrightarrow{v_iv_{i+1}}$ and $\overrightarrow{v_iv_{i+2}}$ for every $i\ge0$. 

\section{A palindrome with index $4$}\label{sec:pal}

Mikhailova~\cite{M13} considered the largest avoidability index $\cal{P}$ of an avoidable pattern that is a palindrome.
She proved that $\mathcal{P}\le 16$.
An obvious lower bound is $\mathcal{P}\ge\lambda(AA)=3$.
For a better lower bound, we consider the palindromic pattern $ABCADACBA$ or, equivalently,
the ternary formula $f=ABCA.ACBA$.
Since it is a ternary formula, $f$ is $4$-avoidable.
It remains to show that $f$ is not $3$-avoidable.
Let $w$ be a ternary recurrent word avoiding $f$.
Suppose to get a contradiction that $w$ contains a square $uu$. 
Then there exists a non-empty word $v$ such that $uuvuu$ is a factor of $w$.
Thus, $w$ contains an occurrence of $f$ given by the morphism $A\mapsto u,B\mapsto u,C\mapsto v$.
This contradiction shows that $w$ is square-free.
A computer check shows that no infinite ternary square-free word avoids $f$.
This holds even if we forbid only squares and every occurrence $h$ of $f$ such that
$|h(A)|=1$ and $|h(B)|+|h(C)|\le5$.
Thus, $\mathcal{P}\ge\lambda(ABCADACBA)=\lambda(ABCA.ACBA)=4$.

\bibliographystyle{plain} 
\bibliography{biblio}
%
%
%
%
%
\end{document}